\newenvironment{proof}{{\indent \indent \it \textbf{Proof}:\quad}}{\hfill $\blacksquare$\par}
\newtheorem{lemma}{\textbf{Lemma}}
\newtheorem{remark}{\textbf{Remark}}
\def\changeBibColor#1{%
	\in@{#1}{fu2020robust,salem2022secure,liu2015ergodic,chae2014enhanced,wang2022resisting,chiu2013stochastic,shi2024secrecy,kong2018physical,xie2023secrecy,shiu2011physical,goel2008guaranteeing}
	\ifin@\color{red}\else\normalcolor\fi
}
\begin{document}

\title{RSMA-Enabled Covert Communications Against Multiple Spatially Random Wardens}
\author{Xinyue~Pei, Jihao~Liu,
	Xuewen~Luo,
	Xingwei~Wang, Yingyang~Chen,~\IEEEmembership{Senior Member,~IEEE,}\\	Miaowen~Wen,~\IEEEmembership{Senior Member,~IEEE,} 	 and Theodoros~A.~Tsiftsis,~\IEEEmembership{Senior Member,~IEEE}
				\thanks{X. Pei, J. Liu, X. Luo, and X. Wang are with School of Computer Science and Engineering, Northeastern University, Shenyang 110819, China (e-mail: peixy@cse.neu.edu.cn,
			2310734@stu.neu.edu.cn, luoxw@cse.neu.edu.cn,
			 wangxw@mail.neu.edu.cn).}
		\thanks{Y. Chen is with the 
			College of Information Science and Technology, Jinan University, Guangzhou 510632, China (e-mail:
			chenyy@jnu.edu.cn). }
			\thanks{M. Wen is with the National Engineering Technology Research Center
				for Mobile Ultrasonic Detection, South China University of Technology,
				Guangzhou 510640, China (e-mail: eemwwen@scut.edu.cn).}
				\thanks{T. A. Tsiftsis is with Department of Informatics Telecommunications,
					University of Thessaly, Lamia 35100, Greece, and also with the Department
					of Electrical and Electronic Engineering, University of Nottingham Ningbo
					China, Ningbo 315100, China (e-mail: tsiftsis@uth.gr).}
}
\markboth{SUBMITTED TO 	IEEE TRANSACTIONS ON VEHICULAR TECHNOLOGY}{}
\maketitle

\begin{abstract} 
This work investigates covert communication in a rate-splitting multiple access (RSMA)-based multi-user multiple-input single-output  system, where the random locations of the wardens  follow a homogeneous Poisson point process. To demonstrate practical deployment scenarios, imperfect channel state information at the transmitter is considered. Closed-form expressions for the statistics of the received signal-to-interference-plus-noise ratio, along with the analytical formulations for the covertness constraint, outage probability, and effective covert throughput (ECT), are derived. Subsequently, an ECT maximization problem is formulated under covertness and power allocation constraints. This optimization problem is addressed using an alternating optimization-assisted genetic algorithm (AO-GA). Simulation results corroborate the theoretical analysis and demonstrate the superiority of RSMA over conventional multiple access schemes, as well as the effectiveness of the proposed AO-GA.
\end{abstract}

\begin{IEEEkeywords}
Throughput, outage probability (OP), rate-splitting multiple access (RSMA), covert communications (CC), stochastic geometry.
\end{IEEEkeywords}
\maketitle
\section{Introduction}
Rate-splitting multiple access (RSMA) has emerged as a promising technique for scenarios requiring robust and flexible interference management \cite{dizdar2021rate}. By integrating the rate splitting strategy at the transmitter with successive interference cancellation (SIC) at the receiver, RSMA outperforms space division multiple access (SDMA) and non-orthogonal multiple access (NOMA) in downlink multi-user multiple-input single-output (MU-MISO) scenarios \cite{mao2022rate}.

However, the broadcast nature of wireless channels poses significant threats to the security of RSMA systems. Traditional techniques like cryptographic encryption and physical layer security  have been employed to prevent information from being decoded by wardens \cite{liu2017enhancing}. Nevertheless, these methods cannot prevent 
communication activities from being monitored. In contrast, covert communication (CC) enhances both security and privacy by concealing the existence of transmissions from wardens \cite{ma2021robust}.

Several works have studied the covertness of RSMA systems. Specifically, \cite{hieu2023joint,nguyen2023jamming,chang2024star,zhang2024rate,jia2025robust} investigated optimization strategies for enhancing CC performance in MISO RSMA systems. In parallel,  \cite{zhang2024covert,zhang2025covert,liang2025covert} have conducted performance analyses of RSMA-based CC systems under various architectures, focusing on key metrics such as detection error probability (DEP), outage probability (OP), and covert transmission rate. 

Notably, existing studies \cite{nguyen2023jamming,hieu2023joint,chang2024star,zhang2024rate,zhang2024covert,zhang2025covert,liang2025covert,jia2025robust} primarily consider either a single warden or multiple static wardens. However, in practice, it is challenging for the base station (BS) to obtain accurate information about the number and locations of wardens, especially when they remain silent. Stochastic geometry provides a powerful analytical framework to model the spatial randomness of such unknown and silent wardens \cite{ma2021covert,ma2023covert}. To the best of the authors’ knowledge, CC  performance in RSMA systems with multiple randomly distributed wardens lacks in-depth investigation, which motivates  us in this work.

\begin{figure}[t]
	\centering
	\includegraphics[width=1.5in]{./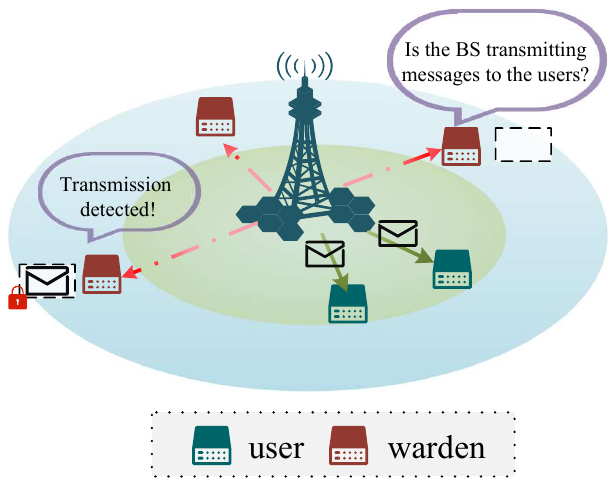}
	\caption{The proposed system model. The green disc region represents the protected zone, and the blue annular region is the area where wardens are distributed.} 	
	\label{System_model}
\end{figure}
In this work, we consider a MU-MISO RSMA-based CC system, where wardens are randomly distributed according to a homogeneous Poisson point process (HPPP). To ensure a  realistic and practical scenario, imperfect channel state information at the transmitter (CSIT) is also taken into account. Our contributions can be summarized as follows:
\begin{itemize}
	\item Closed-form cumulative distribution functions (CDFs) of received signal-to-interference-plus-noise ratios (SINRs) are derived and used to analyze the theoretical expressions for covertness constraint, OP, and effective covert throughput (ECT) under imperfect CSIT and randomly located wardens.
	\item An ECT maximization problem is formulated by jointly optimizing the transmit power and the common power coefficient. Due to the non-convex nature of the objective function and the intractability of its gradients, an alternating optimization-assisted genetic algorithm (AO-GA) is proposed.
	\item Monte Carlo simulations validate the accuracy of the derived theoretical expressions and confirm that the proposed RSMA scheme outperforms NOMA and SDMA. Simulation results also verify the effectiveness of the proposed AO-GA.\footnote{\emph{Notation}: The operations $\Pr(\cdot)$, $|\cdot|$, and $\mathbb{E}\{\cdot\}$ denote the probability, the absolute value,  and the expectation,
		respectively.  $F_X(\cdot)$ and $f_X(\cdot)$ respectively denote the CDF and probability density function (PDF) of a random variable (RV) $X$. $\mathcal{CN}(\mu,\beta)$ denotes a complex Gaussian distribution with mean $\mu$ and variance $\beta$. A gamma distribution with shape parameter $D$ and scale parameter $\theta$ is represented by $\Gamma(D,\theta)$, whose PDF is ${1}/{(\Gamma(D)\theta^D)}x^{D-1}\exp(-{x}/{\theta})$, and CDF is $\gamma(D,x/\theta)/\Gamma(D)$, which can be rewritten as $1-\exp(-x/\theta)\sum_{i=0}^{D-1}1/i!(x/\theta)^i$ with positive integer $D$. $\lfloor \cdot \rceil$ finds the nearest integer. $W_{v,\frac{1}{2}+v}(z)=z^{-v}\exp(z/2)\int_{z}^{\infty}t^{2v}\exp(-t)dt$ is the Whittaker function defined in [\citen{gradshteyn2014table}, (9.224)].}
\end{itemize}

\section{System Model}
We consider a downlink RSMA-based CC system as shown in Fig. \ref{System_model}, where an $M$-antenna BS serves $K$ single-antenna covert users, indexed by $k \in \mathcal{K} = \{1, \ldots, K\}$, with $M \geq K$. Meanwhile, multiple non-colluding passive single-antenna wardens attempt to detect the covert transmission. These wardens, indexed by $v$, are spatially distributed according to a HPPP $\Phi_{\text{E}}$ with density $\lambda_e$, modeling their unknown random locations and numbers.  It is assumed that wardens can be detected near the BS, and thus a protect zone, defined as a disc of radius $r_p$ centered at the BS, is established.\footnote{The BS can create such zone using various detection methods, e.g., metal detectors and X-ray scanners, to identify and eliminate nearby suspicious devices before transmission \cite{liu2017enhancing}.} To enhance covertness, covert users are assumed to be located within this zone. Outside this zone, wardens are spatially distributed over the infinite two-dimensional plane.
 
Without loss of generality, all channels are assumed to experience independent Rayleigh fading with large-scale path loss. The channel from  BS to node $\iota$ ($\iota\in\{k,v\}$) is given by
$\textbf{h}_\iota = \sqrt{\beta_\iota} \textbf{g}_\iota$ ($\textbf{h}_\iota$, $\textbf{g}_\iota\in\mathbb{C}^{M\times 1}$),
where $\beta_{\iota}\triangleq d_{\iota}^{-\alpha}$ models the large-scale fading with $\alpha$ being the path loss exponent, and $\textbf{g}_\iota$ is the small-scale fading vector with independent and identically distributed (i.i.d.) entries modeled as $\mathcal{CN}(0,1)$. Notably, we consider a worst-case secrecy scenario, where the BS has imperfect CSIT of covert users and only statistical knowledge of the wardens' channels, lacking their instantaneous CSI \cite{liu2017enhancing,pei2024secrecy}.  Then, the accurate channel vector from BS to $U_k$ is expressed as  $
 \textbf{g}_k=\sqrt{\epsilon^2}\hat{\textbf{g}}_k+\sqrt{1-\epsilon^2} \triangle\textbf{g}_k$, where $\hat{\textbf{g}}_k$ and $\triangle\textbf{g}_k$ respectively represent the estimated channel and estimation error, each  consisting of  i.i.d. $\mathcal{CN}(0,1)$ entries. The parameter $\epsilon\in[0,1]$ reflects the level of CSIT uncertainty  with a smaller $\epsilon$ indicating worse CSIT quality.

Following RSMA protocol, each message intended for the $k$-th user (denoted by $U_k$) is divided into common and private parts. The common parts of all users are jointly encoded into a common stream $s_{c}$, while the private part of $U_k$ is encoded into private stream $s_k$. It is assumed that $\mathbb{E}[|s_{c}|^2] = \mathbb{E}[|s_k|^2] = 1$, with corresponding power allocation coefficients $a_{c}$ and $a_k$, satisfying $a_{c} + \sum_{k\in\mathcal{K}} a_k = 1$.  Consequently, the signal transmitted by the BS can be derived as $\textbf{s}=\sqrt{ Pa_c}\textbf{w}_cs_c+\sum_{k=1}^{K}\sqrt{ Pa_k}\textbf{w}_ks_k,$ where $P$ represents the total transmit power, while $\textbf{w}_c$, $\textbf{w}_k\in\mathbb{C}^{M\times 1}$ respectively denote the normalized precoders for the common and private streams. 
To eliminate inter-user interference, the private streams utilize zero-forcing beamforming (ZFBF),  ensuring $|\hat{\textbf{g}}_k^H\textbf{w}_j|^2=0$. In contrast, the common stream adopts a random Gaussian beamformer $\textbf{w}_c$ that remains statistically independent of $\hat{\textbf{g}}_k$,  $\triangle\textbf{g}_k$, and $\textbf{w}_k$ \cite{dizdar2021rate,mao2022rate}.\footnote{
Employing ZFBF for the private streams and random precoding for the common stream is proven to be  the optimal Degree of Freedom (DoF)  in MISO broadcast channels scenarios with imperfect CSIT  \cite{mao2022rate}.}

The received signal at $U_k$ can be written as $y_k=\textbf{h}_k^H\textbf{s}+n_k$, where $n_k\sim\mathcal{CN}(0,\sigma^2)$ represents the additive white Gaussian noise (AWGN)  at $U_k$. According to RSMA protocol, $U_k$ first decodes the common stream $s_c$ by treating all private streams as interference. Next, by applying SIC, $U_k$ removes $s_c$ from $y_k$. After successfully canceling $s_c$, $U_k$ decodes its desired private stream $s_k$, treating the private streams of other users as interference. Therefore, the corresponding SINRs for decoding $s_c$ and $s_k$ at $U_k$ are respectively given by
 \begin{small}
	 	\begin{align}
		 		\hspace*{-0.2cm}\gamma_{c,k}\!\!=\!\!\frac{ \mathcal{P}a_c|\textbf{h}_k^H\textbf{w}_c|^2}{\sum_{j\in\mathcal{K}} \mathcal{P}a_j|\textbf{h}_k^H\textbf{w}_j|^2\!+\!1}, \, 
		 		\gamma_{p,k}\!\!=\!\!\frac{ \mathcal{P}a_k|\textbf{h}_k^H\textbf{w}_k|^2}{\sum_{j\in\mathcal{K}\backslash k} \mathcal{P}a_j|\textbf{h}_k^H\textbf{w}_j|^2\!+\!1},
		 	\end{align}
	 \end{small}where $\mathcal{P}=P/\sigma^2$ denotes the transmit signal-to-noise ratio (SNR) at BS.


Then, we focus on the detection process at the wardens. A typical $v$-th warden (denoted by $\text{warden-}v$) monitors the BS’s activity and distinguishes between two hypotheses \cite{ma2021covert}:
\begin{align}\label{hypotheses}
	\begin{cases}
		\mathcal{H}_0\colon & y_{v} = n_{v}, \quad\quad\quad\,\,\,\,\,\, \text{(BS is silent)} \\
		\mathcal{H}_1\colon & y_{v} = \textbf{h}_{v}^H\textbf{s} + n_{v}, \quad \text{(BS is transmitting)}
	\end{cases}			
\end{align}
where $y_{v}$ and $n_v\sim\mathcal{CN}(0,\sigma^2)$ denote  the received signal and the AWGN at $\text{warden-}v$,  respectively.

Based on the above hypotheses, $\text{warden-}v$ performs a binary hypothesis test. Let $\mathcal{D}_0$ and $\mathcal{D}_1$ denote the decisions favoring $\mathcal{H}_0$ and $\mathcal{H}_1$, respectively. The reliability of this test is quantified by the total DEP, defined as \cite{ma2021covert}: $	\xi_{v} = \mathbb{P}_{v,\rm FA} + \mathbb{P}_{v,\rm MD}, $ 
where $\mathbb{P}_{v,\rm FA} = \Pr(\mathcal{D}_1|\mathcal{H}_0)$ and $\mathbb{P}_{v,\rm MD} = \Pr(\mathcal{D}_0|\mathcal{H}_1)$ denote the false alarm and the miss detection probabilities at $\text{warden-}v$, respectively. Under optimal detection, the warden minimizes the DEP to $\xi_v^*$. To guarantee CC, the system must satisfy $\xi_v^* \geq 1 - \varsigma$, where $\varsigma \in [0,1]$ is a predefined covertness tolerance, ensuring that the warden’s detection capability remains below an acceptable threshold.
\begin{table}[]
	\begin{Center}					
		\caption{Shape and Scale Parameters for Different RVs}\vspace{-4mm}  
		\hspace*{-2mm}
		\label{RVs_parameters}
		\scalebox{0.59}{	\begin{tabular}{|c|c|c|}
				\hline\hline
				\multicolumn{1}{|c|}{RV}         & \multicolumn{1}{c|}{Shape parameter} & \multicolumn{1}{c|}{Scale parameter}                 \\ \hline\hline
				$\sum_{j\in\mathcal{K}}a_j|\textbf{g}_k^H\textbf{w}_j|^2$                          & 	$D_{\mathcal{Y},k}=\frac{[(M-K+1)\epsilon^2a_k+(1-a_c)(1-\epsilon^2)]^2}{(M-K+1)\epsilon^4a_k^2+\sum_{j\in\mathcal{K}}a_j^2(1-\epsilon^2)^2}  $                                 & $\hat{\theta}_{\mathcal{Y},k}=\frac{(M-K+1)\epsilon^4a_k^2+\sum_{j\in\mathcal{K}}a_j^2(1-\epsilon^2)^2}{(M-K+1)\epsilon^2a_k+(1-a_c)(1-\epsilon^2)}$                            \\ \hline
				$\sum_{j\in\mathcal{K}}a_j|\textbf{g}_v^H\textbf{w}_j|^2$                          &  $D_1=\frac{(1-a_c)^2}{\sum_{j\in\mathcal{K}}a_j^2}$                                    &   $\hat{\theta}_1=\frac{\sum_{j\in\mathcal{K}}a_j^2}{1-a_c}$                              \\ \hline
				$\sum_{j\in\mathcal{K}\backslash k} a_j|\textbf{g}_k^H\textbf{w}_j|^2$                         &  $	D_{\mathcal{L},k}=\frac{(1-a_c-a_k)^2}{\sum_{j\in\mathcal{K}\backslash k}a_j^2}$                                  &  $\hat{\theta}_{\mathcal{L},k}=\frac{\sum_{j\in\mathcal{K}\backslash k}a_j^2}{1-a_c-a_k}$                               \\ 
				\hline
				$|\textbf{g}_k^H\textbf{w}_k|^2$                           &  $	D_{\mathcal{Z}}=\frac{[(M-K+1)\epsilon^2+1-\epsilon^2]^2}{(M-K+1)\epsilon^4+(1-\epsilon^2)^2}$                                 &  $\hat{\theta}_{\mathcal{Z}}=\frac{(M-K+1)\epsilon^4+(1-\epsilon^2)^2}{(M-K+1)\epsilon^2+1-\epsilon^2}$                               \\ 
				
				\hline\hline
		\end{tabular}}
	\end{Center}
\end{table}
Under the non-colluding detection strategy, each warden independently determines whether the BS is transmitting signals based on its own observation. In this scenario, the CC performance of the system is dominated by the most detrimental warden with the minimal $\xi_v^*$. To ensure covertness, the system must guarantee 
$\xi_w=\min_{v\in\Phi_{\text{E}}}\{ \xi_v^*\}\geq 1 - \varsigma.$

\vspace*{-0.2cm}
\section{Statistical Analysis of SINRs}
To obtain the theoretical expressions of key performance metrics, we first investigate the statistics of received SINRs and obtain the following lemmas.
\begin{lemma}\label{CDF_of_gammack}
The CDF of $\gamma_{c,k}$ can be derived as
\begin{small}
\begin{align}\label{closed_form_cdf_of_gammack}
	F_{\gamma_{c,k}}(x)=1-\exp(-\varphi_kx)(\mu_kx+1)^{-D_{\mathcal{Y},k}},
\end{align}
\end{small}where $\mu_k={\hat{\theta}_{\mathcal{Y},k}}/{a_c}$, $	\varphi_k=1/{( \mathcal{P}a_c\beta_k)}$; $D_{\mathcal{Y},k}$ and $\hat{\theta}_{\mathcal{Y},k}$ have been defined in Table \ref{RVs_parameters}.
\end{lemma}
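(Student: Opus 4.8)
The plan is to decompose $\gamma_{c,k}$ into a ratio of two statistically independent quantities and then evaluate the CDF by conditioning on the denominator. Pulling the path loss out of $\mathbf{h}_k=\sqrt{\beta_k}\mathbf{g}_k$, I first rewrite the SINR as
\begin{align}
\gamma_{c,k}=\frac{\mathcal{P}a_c\beta_k\mathcal{X}}{\mathcal{P}\beta_k\mathcal{Y}+1},
\end{align}
where $\mathcal{X}\triangleq|\mathbf{g}_k^H\mathbf{w}_c|^2$ denotes the common-beam gain and $\mathcal{Y}\triangleq\sum_{j\in\mathcal{K}}a_j|\mathbf{g}_k^H\mathbf{w}_j|^2$ denotes the aggregate private-stream interference already tabulated in Table~\ref{RVs_parameters}.

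Next I would characterize the two ingredients separately. Since $\mathbf{w}_c$ is a normalized random Gaussian beamformer statistically independent of $\mathbf{g}_k$, rotational invariance of the i.i.d. $\mathcal{CN}(0,1)$ channel against a fixed unit vector gives $\mathbf{g}_k^H\mathbf{w}_c\sim\mathcal{CN}(0,1)$, so that $\mathcal{X}\sim\Gamma(1,1)$ is a unit-mean exponential RV with $F_{\mathcal{X}}(t)=1-e^{-t}$. For the denominator, I invoke the Gamma approximation recorded in Table~\ref{RVs_parameters}, namely $\mathcal{Y}\sim\Gamma(D_{\mathcal{Y},k},\hat{\theta}_{\mathcal{Y},k})$, which is obtained by moment-matching the weighted sum of the ZF private-beam gains under imperfect CSIT.

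With these in hand, I evaluate the CDF by conditioning on $\mathcal{Y}$. Writing $F_{\gamma_{c,k}}(x)=\Pr\!\big(\mathcal{X}\le x(\mathcal{P}\beta_k\mathcal{Y}+1)/(\mathcal{P}a_c\beta_k)\big)$ and exploiting the independence of $\mathbf{w}_c$ from the private precoders, the exponential CDF yields
\begin{align}
F_{\gamma_{c,k}}(x)=1-e^{-\frac{x}{\mathcal{P}a_c\beta_k}}\,\mathbb{E}_{\mathcal{Y}}\!\left[e^{-\frac{x}{a_c}\mathcal{Y}}\right].
\end{align}
The leading factor is exactly $\exp(-\varphi_k x)$, while the remaining expectation is the Laplace transform of the Gamma RV $\mathcal{Y}$ evaluated at $s=x/a_c$, which admits the closed form $(1+s\hat{\theta}_{\mathcal{Y},k})^{-D_{\mathcal{Y},k}}=(\mu_k x+1)^{-D_{\mathcal{Y},k}}$ after substituting $\mu_k=\hat{\theta}_{\mathcal{Y},k}/a_c$. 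Combining the two pieces reproduces the claimed expression.

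The main obstacle is the distributional characterization of $\mathcal{Y}$. This aggregate is a weighted sum of correlated gains: the self-term $|\mathbf{g}_k^H\mathbf{w}_k|^2$ retains a residual $\epsilon^2$-scaled component of order $M-K+1$ together with an estimation-error component, whereas the cross-terms $|\mathbf{g}_k^H\mathbf{w}_j|^2$ for $j\neq k$ reduce to pure CSIT-error leakage scaled by $1-\epsilon^2$ by virtue of the ZF constraint $|\hat{\mathbf{g}}_k^H\mathbf{w}_j|^2=0$. Showing that this mixture is faithfully captured by a single Gamma law, and extracting the matched parameters $D_{\mathcal{Y},k}$ and $\hat{\theta}_{\mathcal{Y},k}$, is the delicate step; I would take these directly from the second-order moment matching in Table~\ref{RVs_parameters}. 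A secondary subtlety is the independence between $\mathcal{X}$ and $\mathcal{Y}$, which I justify via the assumed statistical independence of the common random beamformer $\mathbf{w}_c$ from $\hat{\mathbf{g}}_k$, $\triangle\mathbf{g}_k$, and $\mathbf{w}_k$.
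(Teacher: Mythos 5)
Your proposal is correct and follows essentially the same route as the paper's proof: you condition on the Gamma-approximated interference $\mathcal{Y}$ (moment-matched exactly as in Table \ref{RVs_parameters}, using the same ZF/CSIT-error decomposition), exploit the exponential distribution of the common-beam gain, and average over $\mathcal{Y}$. The only cosmetic difference is that you evaluate the expectation as the Laplace transform of a Gamma random variable in closed form, whereas the paper writes it as an explicit integral and invokes [\citen{gradshteyn2014table}, Eq. (3.351.3)] --- these are the same computation.
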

\begin{proof}
	See Appendix \ref{Proof_of_cdf_gammac}.
\end{proof}
\begin{lemma}\label{CDF_of_gammapk}
	The CDF of $\gamma_{p,k}$ can be derived as
\begin{small}
	\begin{align}\label{closed_form_cdf_of_gammapk}
		F_{\gamma_{p,k}}(x) = 1 - \sum_{m=0}^{\lfloor D_{\mathcal{Z}} \rceil - 1} \sum_{n=0}^{\lfloor D_{\mathcal{Z}} \rceil - m - 1}
		\frac{ \exp\left(-\frac{x}{\theta_{\mathcal{Z}}}\right) x^{m+n}\psi_k}
		{\left( x\theta_{\mathcal{L},k} + \theta_{\mathcal{Z}} \right)^{n + D_{\mathcal{L},k}}  },
	\end{align}
\end{small}where
$\psi_k=\frac{\theta_{\mathcal{L},k}^n\theta_{\mathcal{Z}}^{D_{\mathcal{L},k}-m}\Gamma(n+D_{\mathcal{L},k})}{m!n!\Gamma(D_{\mathcal{L},k})}$,  $\theta_{\mathcal{L},k}=\mathcal{P}\beta_k(1-\epsilon^2)\hat{\theta}_{\mathcal{L},k}$, and ${\theta}_{\mathcal{Z}}=\mathcal{P}a_k\beta_k\hat{\theta}_{\mathcal{Z}}$; 
$D_{\mathcal{L},k}$, $D_{\mathcal{Z}}$, $\hat{\theta}_{\mathcal{L},k}$, and $\hat{\theta}_{\mathcal{Z}}$ have been defined in Table \ref{RVs_parameters}.
\end{lemma}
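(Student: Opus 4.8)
The plan is to treat $\gamma_{p,k}$ as a ratio of two independent gamma variates and to integrate the conditional CDF of the numerator against the density of the denominator. First I would identify the signal and the interference terms. Using $\textbf{h}_k=\sqrt{\beta_k}\textbf{g}_k$ together with the ZFBF property $\hat{\textbf{g}}_k^H\textbf{w}_j=0$ (so that for $j\neq k$ only the error component $\sqrt{1-\epsilon^2}\triangle\textbf{g}_k$ survives the projection), I would write the numerator as $\mathcal{Z}\triangleq\mathcal{P}a_k|\textbf{h}_k^H\textbf{w}_k|^2=\mathcal{P}a_k\beta_k|\textbf{g}_k^H\textbf{w}_k|^2$ and the interference as $\mathcal{I}\triangleq\sum_{j\in\mathcal{K}\backslash k}\mathcal{P}a_j|\textbf{h}_k^H\textbf{w}_j|^2=\mathcal{P}\beta_k(1-\epsilon^2)\sum_{j\in\mathcal{K}\backslash k}a_j|\triangle\textbf{g}_k^H\textbf{w}_j|^2$. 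Invoking the gamma characterizations from Table~\ref{RVs_parameters}, the deterministic scaling by $\mathcal{P}a_k\beta_k$ and by $\mathcal{P}\beta_k(1-\epsilon^2)$ yields $\mathcal{Z}\sim\Gamma(D_{\mathcal{Z}},\theta_{\mathcal{Z}})$ and $\mathcal{I}\sim\Gamma(D_{\mathcal{L},k},\theta_{\mathcal{L},k})$ with exactly the scales stated in the lemma, and the two are treated as independent since they are projections of the channel onto the direction $\textbf{w}_k$ and the complementary directions $\{\textbf{w}_j\}_{j\neq k}$, respectively.

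Next I would write $F_{\gamma_{p,k}}(x)=\Pr(\mathcal{Z}\le x(\mathcal{I}+1))$ and condition on $\mathcal{I}=y$, giving $F_{\gamma_{p,k}}(x)=\int_0^\infty F_{\mathcal{Z}}(x(y+1))f_{\mathcal{I}}(y)\,dy$. Replacing $D_{\mathcal{Z}}$ by its nearest integer $\lfloor D_{\mathcal{Z}}\rceil$, I would use the finite-sum form of the gamma CDF recalled in the notation, namely $F_{\mathcal{Z}}(z)=1-\exp(-z/\theta_{\mathcal{Z}})\sum_{m=0}^{\lfloor D_{\mathcal{Z}}\rceil-1}(z/\theta_{\mathcal{Z}})^m/m!$ evaluated at $z=x(y+1)$. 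Since $f_{\mathcal{I}}$ integrates to one, the leading constant survives as the $1$ in (\ref{closed_form_cdf_of_gammapk}), and each remaining summand becomes $\exp(-x/\theta_{\mathcal{Z}})(x/\theta_{\mathcal{Z}})^m/m!$ times the integral $\int_0^\infty (y+1)^m\exp(-xy/\theta_{\mathcal{Z}})f_{\mathcal{I}}(y)\,dy$.

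Then I would expand $(y+1)^m=\sum_{n=0}^m\binom{m}{n}y^n$ by the binomial theorem and evaluate each resulting integral $\int_0^\infty y^{n+D_{\mathcal{L},k}-1}\exp\!\big(-y(x/\theta_{\mathcal{Z}}+1/\theta_{\mathcal{L},k})\big)\,dy$ in closed form via $\int_0^\infty t^{s-1}e^{-pt}\,dt=\Gamma(s)p^{-s}$. This produces the factor $\Gamma(n+D_{\mathcal{L},k})$ and, after combining the two exponential rates into the single denominator $(x\theta_{\mathcal{L},k}+\theta_{\mathcal{Z}})^{n+D_{\mathcal{L},k}}$, the accompanying powers of $\theta_{\mathcal{Z}}$ and $\theta_{\mathcal{L},k}$. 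The last step is a reindexing of the double sum: the substitution that replaces the old outer index $m$ by $m+n$ turns the region $0\le n\le m\le\lfloor D_{\mathcal{Z}}\rceil-1$ into the triangular region $m+n\le\lfloor D_{\mathcal{Z}}\rceil-1$ of the statement and collapses $\binom{m}{n}/m!=1/(n!(m-n)!)$ into the $1/(m!\,n!)$ appearing in $\psi_k$, delivering (\ref{closed_form_cdf_of_gammapk}).

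I expect the main obstacle to be bookkeeping rather than any single hard estimate: correctly tracking the powers of $x$, $\theta_{\mathcal{Z}}$, and $\theta_{\mathcal{L},k}$ through the binomial expansion and the gamma integral, and then performing the index shift so that the coefficient collapses exactly to $\psi_k$. A secondary point requiring care is the legitimacy of the integer-shape approximation $D_{\mathcal{Z}}\to\lfloor D_{\mathcal{Z}}\rceil$, which is precisely what makes the finite-sum CDF (and hence the stated closed form) available, together with the independence of $\mathcal{Z}$ and $\mathcal{I}$ used when passing from the joint law to the single integral.
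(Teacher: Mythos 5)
Your proof is correct and takes essentially the same route as the paper's: both condition on the gamma-approximated interference $\mathcal{L}\sim\Gamma(D_{\mathcal{L},k},\theta_{\mathcal{L},k})$, replace $D_{\mathcal{Z}}$ by $\lfloor D_{\mathcal{Z}}\rceil$ to invoke the finite-sum gamma CDF, expand $(l+1)^k$ binomially, and evaluate each term via $\int_0^\infty t^{s-1}e^{-pt}\,dt=\Gamma(s)p^{-s}$ (the paper's cited Eq.~(3.326.2)). The only cosmetic difference is ordering --- you integrate first and then reindex the double sum, whereas the paper swaps and reindexes the sums inside the integrand (setting $n=k-m$) before integrating --- and both collapse to the same coefficient $\psi_k$.
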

\begin{proof}
	See Appendix \ref{Proof_of_cdf_gammapk}.
\end{proof}
\vspace*{-0.2cm}
\section{Performance Analysis}
In this section, we investigate the key metrics for CC performance of RSMA systems, including covertness constraint, OP, and the ECT.
\vspace*{-0.4cm}
\subsection{Covertness Constraint}
In this subsection, we aim to analyze the system’s covertness by deriving the closed-form expression for the minimum DEP across all wardens, i.e., $\xi_w=\min_{v\in\Phi_{\text{E}}}\{ \xi_v^*\}\geq 1 - \varsigma$. However,
deriving  $\xi_v^*$
is analytically intractable, which complicates the analysis of $\xi_w$. To solve this challenge,  we adopt Pinsker’s inequality, yielding a lower bound for $\xi_v^*$ \cite{ma2021covert,ma2021robust,ma2023covert,zhang2024rate}:
\begin{small}
\begin{align}\label{KL_lower_bound}
	\xi_v^*\geq 1-\sqrt{\frac{1}{2}\mathcal{D}(p_1(y_v)\|p_0(y_v))},
\end{align}
\end{small}where $p_0(y_v)$ and $p_1(y_v)$ respectively denote the likelihood functions of the received signals of $\text{warden-}v$ under $\mathcal{H}_0$
and $\mathcal{H}_1$, and $\mathcal{D}(p_1\|p_0)$ denotes the Kullback-Leibler (KL) divergence  from $p_1(y_v)$ to
$p_0(y_v)$. Then substituting (\ref{KL_lower_bound}) into $\xi_w\geq 1 - \varsigma$, the system covertness can be guaranteed by satisfying the equivalent condition: $\max_{v\in\Phi_{\text{E}}}\left\{\sqrt{\frac{1}{2}\mathcal{D}(p_1(y_v)||p_0(y_v))}\right\}\leq \varsigma$.

Recall that wardens are passive, and thus their instantaneous CSIs are unknown to the BS. Recall that only the statistical CSIs of the wardens are available. To this end, the average KL divergence is adopted in this work, where the expectation is taken over the spatial distribution and the small-scale fading statistics of wardens. Accordingly, the CC constraint for $\text{warden-}v$ can be expressed as \cite{ma2021covert,ma2023covert}:
\begin{small}
\begin{align}\label{average_covert_constraint}
\mathbb{E}\left\{\max_{v\in\Phi_{\text{E}}}\left\{\sqrt{\frac{1}{2}\mathcal{D}(p_1(y_v)||p_0(y_v))}\right\}\right\}\leq \varsigma.
\end{align}
\end{small}Then we can achieve a tractable expression of covertness constraint in the following lemma:
\begin{lemma}\label{expression_of_covertenss_constraint}
The expressions for covertness constraint considering HPPP distributed and non-colluding wardens can be derived as $\nu\mathcal{P}\leq \varsigma$,
where 
\begin{small}
\begin{align}
\nu=\frac{(\lambda_e\pi)^{\frac{\alpha}{4}}\exp\left(\frac{\lambda_e\pi r_p^2}{2}\right)W_{-\frac{\alpha}{4},\frac{1}{2}-\frac{\alpha}{4}}\left(\lambda_e\pi r_p^2\right)}{2r_p^{\frac{\alpha}{2}}}.
\end{align}
\end{small}
\end{lemma}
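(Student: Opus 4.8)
The plan is to evaluate the left-hand side of the average covertness constraint in three stages: reduce the KL divergence to a tractable form, collapse the maximum over the point process onto the single dominant (nearest) warden, and then integrate out the spatial randomness to expose the Whittaker function. First I would write down the two hypotheses explicitly. Under $\mathcal{H}_0$ the observation $y_v$ is $\mathcal{CN}(0,\sigma^2)$, while under $\mathcal{H}_1$, conditioned on the warden's channel, it is zero-mean complex Gaussian with variance $\sigma^2+\Omega_v$, where $\Omega_v=P\beta_v\big(a_c|\textbf{g}_v^H\textbf{w}_c|^2+\sum_{j\in\mathcal{K}}a_j|\textbf{g}_v^H\textbf{w}_j|^2\big)$. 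Using the standard closed form for the KL divergence between two zero-mean complex Gaussians gives $\mathcal{D}(p_1\|p_0)=\Omega_v/\sigma^2-\ln(1+\Omega_v/\sigma^2)$. Since the wardens that matter lie outside the protect zone and therefore observe weak signals, I would invoke the second-order expansion $x-\ln(1+x)\approx x^2/2$, so that $\sqrt{\tfrac12\mathcal{D}}\approx \Omega_v/(2\sigma^2)=\tfrac{\mathcal{P}\beta_v}{2}\big(a_c|\textbf{g}_v^H\textbf{w}_c|^2+\sum_{j\in\mathcal{K}}a_j|\textbf{g}_v^H\textbf{w}_j|^2\big)$, which is linear in $\mathcal{P}$ and hence consistent with the target form $\nu\mathcal{P}\leq\varsigma$.

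Next I would handle the maximum and the expectation jointly. Because $\beta_v=d_v^{-\alpha}$ decreases monotonically in the distance $d_v$, the maximum over $\Phi_{\text{E}}$ is governed by the nearest warden, so I would replace $\max_{v\in\Phi_{\text{E}}}$ by evaluation at the nearest-neighbor distance $d_{\min}$ and then average over the independent small-scale fading. Here $|\textbf{g}_v^H\textbf{w}_c|^2$ and each $|\textbf{g}_v^H\textbf{w}_j|^2$ are unit-mean exponentials, being squared projections of the i.i.d.\ vector $\textbf{g}_v$ onto unit-norm precoders statistically independent of $\textbf{g}_v$; combined with $a_c+\sum_{j\in\mathcal{K}}a_j=1$, the fading factor averages exactly to one. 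This leaves $\mathbb{E}\{\max_{v}\sqrt{\tfrac12\mathcal{D}}\}\approx \tfrac{\mathcal{P}}{2}\,\mathbb{E}\{d_{\min}^{-\alpha}\}$, so that $\nu=\tfrac12\mathbb{E}\{d_{\min}^{-\alpha}\}$.

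Finally I would compute $\mathbb{E}\{d_{\min}^{-\alpha}\}$. For a HPPP of density $\lambda_e$ on the plane with the protect disc of radius $r_p$ excluded, the void probability gives the nearest-warden CCDF $\Pr(d_{\min}>r)=\exp(-\lambda_e\pi(r^2-r_p^2))$ for $r\geq r_p$, hence the density $2\pi\lambda_e r\exp(-\lambda_e\pi(r^2-r_p^2))$. Substituting $u=r^2$ and then $t=\lambda_e\pi u$ turns $\int_{r_p}^{\infty} r^{-\alpha}f(r)\,dr$ into $\pi\lambda_e\exp(\lambda_e\pi r_p^2)(\lambda_e\pi)^{\alpha/2-1}\int_{\lambda_e\pi r_p^2}^{\infty}t^{-\alpha/2}e^{-t}\,dt$, and matching the remaining integral to the Whittaker representation with $2v=-\alpha/2$ (i.e.\ $v=-\tfrac{\alpha}{4}$) and $z=\lambda_e\pi r_p^2$ reproduces the stated $\nu$ after collecting the $(\lambda_e\pi)$, $r_p$, and exponential factors. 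The main obstacle is the middle stage: rigorously justifying the collapse of $\mathbb{E}\{\max_v(\cdot)\}$ onto the nearest warden with its fading averaged to unity. This relies on the monotonicity of $\beta_v$ in distance together with the small-signal approximation, and it is the least automatic part of the argument; the opening KL computation and the closing Whittaker substitution are essentially routine once this reduction is accepted.
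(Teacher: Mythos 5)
Your proposal is correct and takes essentially the same route as the paper's own proof: complex Gaussian likelihoods giving $\mathcal{D}(p_1\|p_0)=m-\ln(1+m)$, the quadratic step $\mathcal{D}\leq m^{2}/2$, collapse of the maximum onto the nearest warden with the fading factor averaged to one, and the void-probability/nearest-distance integral evaluated via the Whittaker identity (Gradshteyn--Ryzhik, Eq.\ (3.381.6)). The only cosmetic differences are that the paper invokes the inequality $\ln(1+x)\geq x-x^{2}/2$ rather than a Taylor approximation, and obtains the unit mean of $\mathcal{M}$ through the moment-matched Gamma parameters ($m_0=D_{\mathcal{M}}\theta_{\mathcal{M}}=1$) instead of your direct computation from $a_c+\sum_{k\in\mathcal{K}}a_k=1$; the nearest-warden reduction you flag as the least rigorous step is exactly the same simplification the paper makes.
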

\begin{proof}
	See Appendix \ref{proof_of_covertness}.
\end{proof}
\begin{remark}
	It is observed that $\nu$ depends only on path loss exponent $\alpha$, protected zone radius $r_p$, and warden density $\lambda_e$, regardless of the power allocation  coefficients or the number of transmit antennas.
\end{remark}

\vspace*{-0.5cm}
\subsection{Outage Probability Analysis}
Given that OP is crucial, we present its analysis below. Under RSMA protocol, an outage occurs at $U_k$ when it fails to decode either $s_c$ or $s_k$. Thus the OP of $U_k$ can be expressed as
\begin{small}
\begin{align}
P_{\text{out},k}
=&\Pr\left(\gamma_{c,k}<\gamma_{k,th}^{s_c}\right)+\Pr\left(\gamma_{p,k}<\gamma_{k,th}^{s_k}\right)\nonumber\\
&-\underbrace{\Pr\left(\gamma_{c,k}<\gamma_{k,th}^{s_c},\gamma_{p,k}<\gamma_{k,th}^{s_k}\right)}_{I_p},
\end{align}
\end{small}where $\gamma_{k,th}^{s_c}$ and $\gamma_{k,th}^{s_p}$ are the SINR thresholds of $s_c$ and $s_k$ at $U_k$,  respectively. Specifically, the theoretical expression of $P_{\text{out},k}$ is summarized in the following lemma.
\begin{lemma}
	The OP of $U_k$ can be derived as
\begin{small}
	\begin{align}\label{Pout}
	P_{\text{out},k}=\max\left\{	F_{\gamma_{c,k}}(\gamma_{k,th}^{s_c}),	F_{\gamma_{p,k}}(\gamma_{k,th}^{s_k})\right\}.
	\end{align}
\end{small} 
\end{lemma}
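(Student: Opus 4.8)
The plan is to recognize that the displayed expression for $P_{\text{out},k}$ is nothing but the inclusion--exclusion expansion of a union. Writing $A=\{\gamma_{c,k}<\gamma_{k,th}^{s_c}\}$ and $B=\{\gamma_{p,k}<\gamma_{k,th}^{s_k}\}$, the definition gives $P_{\text{out},k}=\Pr(A)+\Pr(B)-\Pr(A\cap B)=\Pr(A\cup B)$, where $\Pr(A)=F_{\gamma_{c,k}}(\gamma_{k,th}^{s_c})$ and $\Pr(B)=F_{\gamma_{p,k}}(\gamma_{k,th}^{s_k})$ are already available in closed form from Lemmas~\ref{CDF_of_gammack} and~\ref{CDF_of_gammapk}. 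Hence the identity to be proved, $P_{\text{out},k}=\max\{\Pr(A),\Pr(B)\}$, is algebraically equivalent to the single claim $I_p=\Pr(A\cap B)=\min\{\Pr(A),\Pr(B)\}$; once this is established the result follows immediately by substituting into the inclusion--exclusion formula and using $a+b-\min\{a,b\}=\max\{a,b\}$.

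So the entire burden of the proof rests on evaluating the joint outage probability $I_p$. The approach I would take is to show that the two outage events are nested, i.e.\ $A\subseteq B$ or $B\subseteq A$ up to a null set, which is exactly the condition under which the Fr\'echet upper bound $\Pr(A\cap B)\le\min\{\Pr(A),\Pr(B)\}$ is attained with equality. To this end I would first express both SINRs over a common set of underlying gamma variables --- the own-stream gain $\mathcal{Z}$, the co-user interference $\mathcal{L}$, the aggregate private interference $\mathcal{Y}=\mathcal{L}+a_k\mathcal{Z}$, and the common-beam gain $\mathcal{X}$ --- and then argue that crossing the common-stream threshold forces, or is forced by, crossing the private-stream threshold uniformly over the channel realization, for the given power split and target rates.

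The main obstacle is precisely this nesting step, and it is delicate because $\gamma_{c,k}$ and $\gamma_{p,k}$ are not monotone functions of a single shared quantity: the common SINR is driven by the random-beam gain $\mathcal{X}$, which is statistically independent of the zero-forcing gain $\mathcal{Z}$ appearing in the private SINR, while only the interference terms are partially shared. I therefore expect the key step to require either a structural ordering argument --- showing that the successive-decoding order together with the rate-splitting constraint $a_c+\sum_{k\in\mathcal{K}}a_k=1$ ties the two thresholds so that one failure event always implies the other --- or, absent an exact pointwise implication, a justification that the joint event coincides with the Fr\'echet bound because the two failure events are comonotonic. I would then determine the direction of the containment by comparing $F_{\gamma_{c,k}}(\gamma_{k,th}^{s_c})$ against $F_{\gamma_{p,k}}(\gamma_{k,th}^{s_k})$ through the two closed forms, confirm that the smaller-probability event is the contained one, and close the argument with the substitution above.
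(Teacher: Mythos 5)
Your reduction of the lemma to the single claim $I_p=\Pr(A\cap B)=\min\{\Pr(A),\Pr(B)\}$ is exactly the right bookkeeping and matches the logical structure of the paper's argument. The gap is in how you propose to establish that claim: proving the events $A$ and $B$ are nested (or comonotonic), so that the Fr\'echet upper bound holds with equality, cannot be carried out, because the claim is not an exact identity in this model. As you yourself observe, the numerator of $\gamma_{c,k}$ is driven by the random-beamformer gain $|\textbf{g}_k^H\textbf{w}_c|^2$, which is statistically independent of the zero-forcing gain $|\textbf{g}_k^H\textbf{w}_k|^2$ driving the numerator of $\gamma_{p,k}$. Conditioned on the shared interference terms, the two failure events are therefore conditionally independent, so their joint probability is a product of two conditional probabilities, which is strictly smaller than the minimum whenever both lie in $(0,1)$; hence no pointwise containment $A\subseteq B$ or $B\subseteq A$ can hold, and no comonotonicity argument is available either. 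Worse, the term $a_k|\textbf{g}_k^H\textbf{w}_k|^2$ appears in the numerator of $\gamma_{p,k}$ but in the denominator of $\gamma_{c,k}$, so the two SINRs are, if anything, negatively associated, which pushes $\Pr(A\cap B)$ further below the Fr\'echet bound rather than up to it.

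The resolution is that the paper never claims the identity exactly: its proof states that $I_p$ is intractable due to the shared Gamma-distributed terms and invokes $I_p\approx\min\{F_{\gamma_{c,k}}(\gamma_{k,th}^{s_c}),F_{\gamma_{p,k}}(\gamma_{k,th}^{s_k})\}$ as a tight approximation borrowed from prior RSMA covert-communication analyses, so the stated equality is to be read as an approximation whose accuracy is validated numerically. Your write-up should either adopt the same stance --- state the Fr\'echet bound $\Pr(A\cap B)\le\min\{\Pr(A),\Pr(B)\}$, conclude the rigorous one-sided result $P_{\text{out},k}\ge\max\{F_{\gamma_{c,k}}(\gamma_{k,th}^{s_c}),F_{\gamma_{p,k}}(\gamma_{k,th}^{s_k})\}$, and then argue (or cite) that the bound is tight enough to serve as the working expression --- or abandon the attempt to prove exact equality, since that step, as you yourself suspected, fails.
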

\begin{proof}
	Due to the shared structure and the presence of Gamma-distributed terms, characterizing $I_p$ becomes intractable. 
Following methods in \cite{zhang2024covert,zhang2025covert,liang2025covert}, $I_p$ can be tightly approximated as $\min\{F_{\gamma_{c,k}}(\gamma_{k,th}^{s_c}),F_{\gamma_{p,k}}(\gamma_{k,th}^{s_k})\}$, which enables a tractable expression of $P_{\text{out},k}$.
\end{proof}
\begin{figure*}[htbp] 
	\centering
	\begin{minipage}[t]{0.24\linewidth} 
		\centering
		\includegraphics[width=0.92\linewidth]{./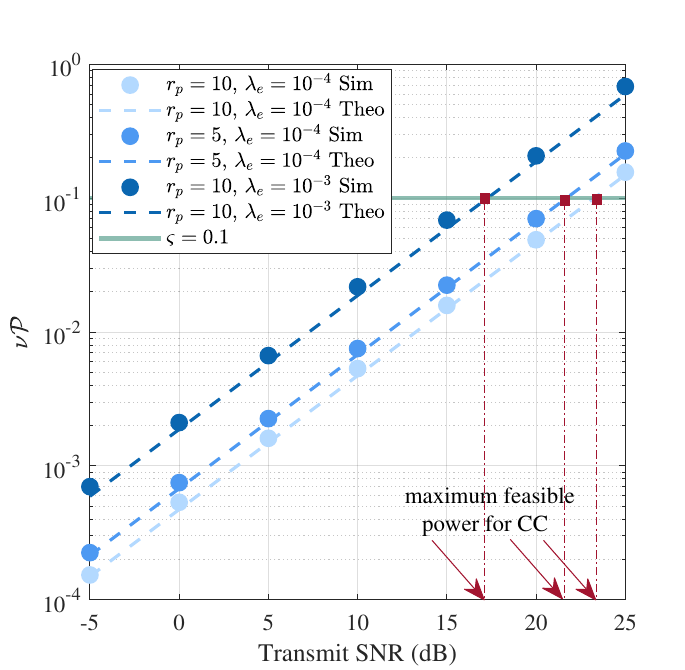}
		\caption{Covertness constraint versus transmit SNR $\mathcal{P}$ with different $r_p$ and $\lambda_e$.}
		\label{CC_vs_SNR}
	\end{minipage}%
	\hfill
	\begin{minipage}[t]{0.24\linewidth}
		\centering
		\includegraphics[width=0.935\linewidth]{./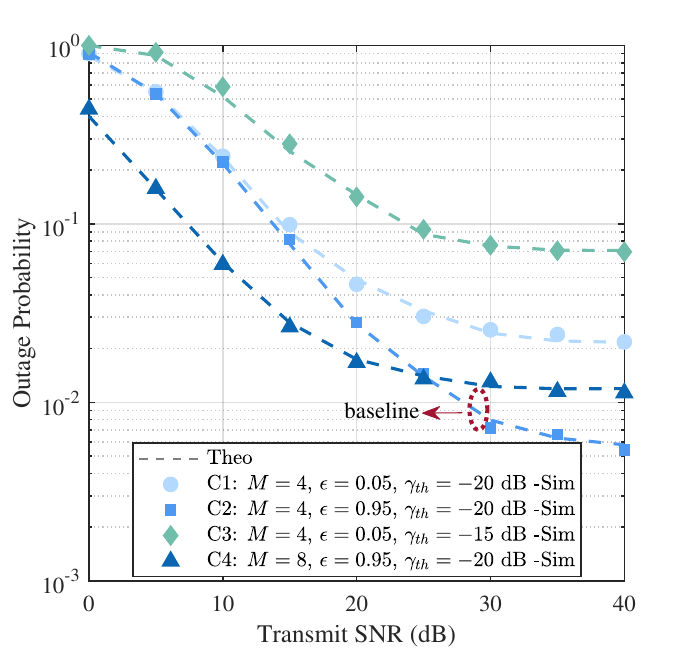}
		\caption{OP versus transmit SNR $\mathcal{P}$ in different Cases (denoted by C1-C4).}
		\label{outage_SNR}
	\end{minipage}%
	\hfill
	\begin{minipage}[t]{0.24\linewidth}
		\centering
		\includegraphics[width=0.93\linewidth]{./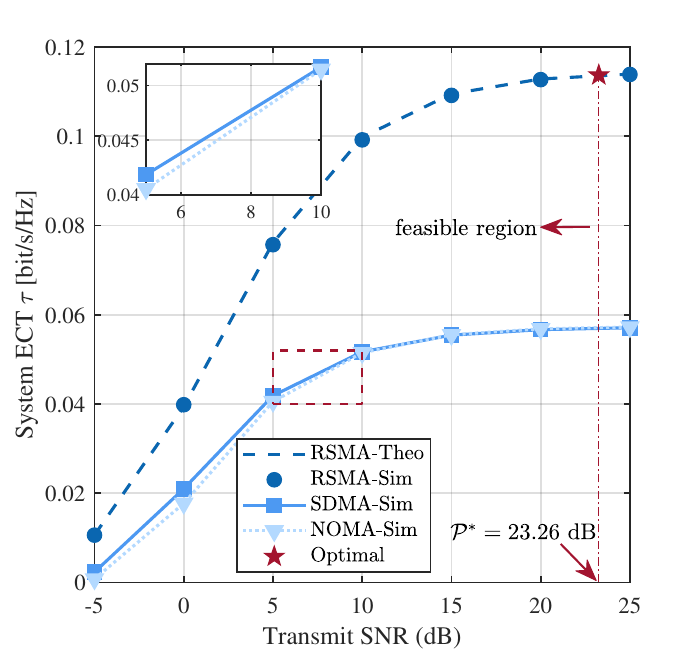}
		\caption{System ECT versus transmit SNR $\mathcal{P}$.}
		\label{ECT_SNR}
	\end{minipage}%
	\hfill
	\begin{minipage}[t]{0.24\linewidth}
		\centering
		\includegraphics[width=0.95\linewidth]{./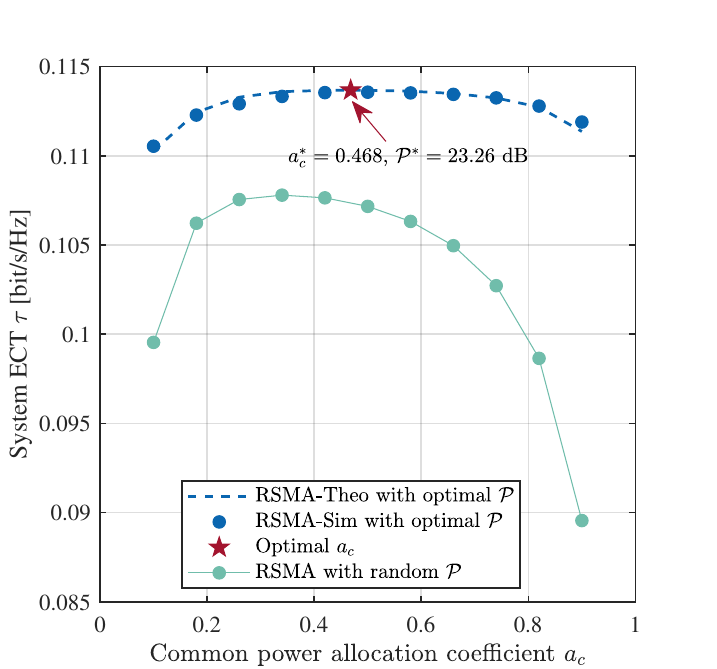} 
		\caption{System ECT versus common power allocation coefficient $a_c$.}
		\label{ECT_vs_ac}
	\end{minipage}
\end{figure*}
\vspace*{-0.3cm}
\subsection{Effective Covert Throughput Analysis}
This subsection focuses on maximizing the ECT, which is defined as the achievable effective throughput under the covertness constraint \cite{ma2021covert,ma2023covert}. Based on the decoding order of $s_c$ and $s_k$, the effective throughput of $U_k$ is given by \cite{mao2022rate,vu2023performance}:
 \begin{small}
 	\begin{align}
 		\tau_k = R_{c,k} \Pr\left(\gamma_{c,k} \geq \gamma_{k,th}^{s_c}\right) + R_{p,k} \left(1 - P_{{\rm out},k}\right),
 	\end{align}
 \end{small}where $R_{c,k} = \log_2(1 + \gamma_{k,th}^{s_c})$ and $R_{p,k} = \log_2(1 + \gamma_{k,th}^{s_k})$ denote the target rates of $s_c$ and $s_k$, respectively. To maximize ECT, we optimize the transmit SNR $\mathcal{P}$ and the power allocation coefficient $a_c$. 
  For tractability, we assume $a_k={(1-a_c)}/{K}$, which is reasonable for symmetric users with identical channel statistics \cite{dizdar2021rate}. Then the optimization problem can be formulated as
\begin{small}
	\begin{align}
	\textbf{P}_0:\; \max_{\mathcal{P},\,a_c} \tau=\sum_{k=1}^{K}\tau_k \; \text{s.t.}\; \nu\mathcal{P} \leq \varsigma, a_c\in[0,1), \mathcal{P}>0.
\end{align}
\end{small}However, from Eqs. (\ref{closed_form_cdf_of_gammack}), (\ref{closed_form_cdf_of_gammapk}), and (\ref{Pout}), it is evident that $\tau$ is a highly complex and non-convex function with respect to both $\mathcal{P}$ and $a_c$. The lack of analytical gradients and unclear monotonicity further hinder the use of conventional convex optimization or gradient-based methods. To address this, we first adopt the AO approach to decompose  $\textbf{P}_0$ into two simpler sub-problems:
\begin{small}
\begin{align}
\textbf{P}_{1}\!:\! \max_{\mathcal{P}} \tau \text{ s.t. } \mathcal{P} \in (0, \varsigma/\nu], \quad 
\textbf{P}_{2}\!:\! \max_{a_c} \tau \text{ s.t. } a_c \in [0, 1).
\end{align}
\end{small}Unfortunately, both $\textbf{P}_1$ and $\textbf{P}_2$ inherit the non-convexity and gradient intractability of $\textbf{P}_0$. Therefore, we resort to the GA to efficiently solve these subproblems. The overall optimization procedure is summarized in Algorithm \ref{AO_with_GA}. 
\begin{remark}
The overall computational complexity of Algorithm \ref{AO_with_GA} is $\mathcal{O}(I_{\text{AO}}(\sum_{i=1}^{2}N_{g,i}(N_{p,i}-N_{e,i})\lfloor D_{\mathcal{Z}} \rceil^2K))$, where $I_{\text{AO}}$ is the number of AO iterations, and the GA complexity for each $\textbf{P}_i$ depends on population size $N_{p,i}$, elitism size $N_{e,i}$, generations $N_{g,i}$, and fitness function $-\tau$.
\end{remark}
\begin{algorithm}[t]
	\begin{small}	
		\caption{AO-GA} 
		\label{AO_with_GA} 
		\LinesNumbered 
		\KwIn{System parameters; tolerance $e$; max iterations $\mathcal{N}_{\max}$.} 
		\KwOut{Optimized $a_c^*$ and $\mathcal{P}^*$.} 
		Randomly initialize $a_c^{(0)}\in[0,1)$ and $\mathcal{P}^{(0)}\in(0,\varsigma/\nu]$, set iteration index $n=0$\;		
		\Repeat{\text{convergence or} $n>\mathcal{N}_{\max}$}{
			
			Fix $a_c^{(n)}$, solve  $\textbf{P}_{1}$ using GA, obtain 
			$\mathcal{P}^{(n+1)}$\;
			
			Fix $\mathcal{P}^{(n+1)}$, solve $\textbf{P}_{2}$ using GA, obtain
			$a_c^{(n+1)}$\;

			\If{$|\tau^{(n+1)} - \tau^{(n)}| < e$}{
				\textbf{break}\;
			}
			$n \leftarrow n + 1$\;
		}

		$a_c^* \leftarrow a_c^{(n)}$, $\mathcal{P}^* \leftarrow \mathcal{P}^{(n)}$\;
	\end{small}
\end{algorithm}

\section{Numerical Results}
In this section, we verify the accuracy of theoretical expressions through Monte Carlo methods. Unless otherwise stated, the simulation parameters are set as follows \cite{dizdar2021rate,zhang2025covert,ma2021covert,ma2023covert,pei2024secrecy}: $M=4$, $K=4$;  $\epsilon=0.95$; $a_c=0.5$ and $a_k=(1-a_c)/K$; $\alpha=2$, $d_k=5$ m, $r_p$=10 m; $\varsigma=0.1$; $\gamma_{k,th}^{s_c}=\gamma_{k,th}^{s_k}=\gamma_{th}=-20$ dB; $e=10^{-9}$ and $\mathcal{N}_{\max}=10$.

In Fig. \ref{CC_vs_SNR}, we plot the covertness constraint versus transmit SNR for different values of the radius of protected zone $r_p$ and the density of warden distribution $\lambda_e$.
It can be observed that $\nu\mathcal{P}$ is a monotonically increasing function of $\mathcal{P}$. As such, increasing $\lambda_e$ enhances the multi-user diversity gain when selecting the most detrimental warden, while decreasing $r_p$ strengthens the eavesdropping channel. Both operations lead to an increase in $\nu$, thereby reducing the maximum feasible transmit power that satisfies the covertness constraint.

In Fig. \ref{outage_SNR}, we compare the outage performance under four cases: C1 increases CSIT uncertainty; C2 adopts the baseline settings; C3 further raises the SINR threshold on top of C1; C4 increases the number of transmit antennas. Results show that higher channel uncertainty and SINR thresholds degrade outage performance. Notably,
the 8-antenna RSMA system only outperforms its 4-antenna
counterpart at low-to-medium SNR. Despite offering more
spatial DoF, the 8-antenna scheme is more sensitive to CSIT
uncertainty. At high SNR, even small estimation errors can
cause beam misalignment and offset the benefits of high-dimensional precoding.

Fig. \ref{ECT_SNR} shows the system ECT versus $\mathcal{P}$ with the optimal $a_c^* = 0.468$ obtained via AO-GA. We compare the performance of the proposed RSMA scheme with that of SDMA using ZFBF (by setting $a_c = 0$), and NOMA with a random precoder, where $a_k$ is determined through exhaustive search.
Clearly, RSMA outperforms both SDMA and NOMA, owing to its robustness and its ability to achieve the optimal DoF. In addition, it is observed that NOMA slightly performs worse than SDMA, as it can suffer from a loss of sum DoF due to the inefficient utilization of SIC when $M>=K$.
Furthermore, the effectiveness of AO-GA is verified in Fig.~\ref{ECT_vs_ac}, where the optimal point is found at $\mathcal{P}^* = 23.26$ dB and $a_c^* = 0.468$.

\section{Conclusions}
In this work, we have investigated a RSMA-based MU-MISO system consisting of a multi-antenna BS, several users, and multiple non-colluding HPPP-distributed wardens.
Considering imperfect CSIT, we have derived theoretical expressions for the covertness constraint, OP, and ECT. Moreover, we have proposed the AO-GA to solve the system ECT optimization problem by jointly optimizing the transmit SNR and the common power allocation coefficient.
The theoretical findings and the effectiveness of the AO-GA have been validated through simulation results.
We have also explored the impacts of transmit SNR, HPPP density, CSIT uncertainty, protected zone radius, and SINR thresholds. 
Finally, simulation results have demonstrated that the proposed RSMA scheme can effectively mitigate the impact of CSIT uncertainty and exhibits superior covert performance against spatially random wardens compared to SDMA and NOMA.

\begin{appendices}
	\setcounter{equation}{0}
	\renewcommand{\theequation}{\thesection.\arabic{equation}}

	\section{ Proof of Lemma \ref{CDF_of_gammack}}\label{Proof_of_cdf_gammac}
		\setcounter{equation}{0}
 For convenience,  define $\mathcal{X}=\mathcal{P}a_c\beta_k|\textbf{g}_k^H\textbf{w}_c|^2$ and $\mathcal{Y}=\sum_{j\in\mathcal{K}} \mathcal{P}a_j\beta_k|\textbf{g}_k^H\textbf{w}_j|^2$.  Since $\textbf{w}_c$ is isotropically distributed and independent of $\textbf{g}_k$,  $\mathcal{X} \sim \Gamma(1, \mathcal{P}a_c\beta_k)$ \cite{dizdar2021rate}. To characterize the distribution of $\mathcal{Y}$, we first decompose the imperfect channel as $
\textbf{g}_k=\sqrt{\epsilon^2}\hat{\textbf{g}}_k+\sqrt{1-\epsilon^2} \triangle\textbf{g}_k$. For tractability, we ignore the cross-terms involving both  $\hat{\textbf{g}}_k$ and $\triangle\textbf{g}_k$. Therefore, we have the following approximations:
\begin{small}
\begin{subequations}
	\begin{align}	a_k\beta_k|\textbf{g}_k^H\textbf{w}_k|^2\approx&a_k\beta_k\left(\epsilon^2|\hat{\textbf{g}}_k^H\textbf{w}_k|^2+(1\!-\!\epsilon^2)|\triangle\textbf{g}_k^H\textbf{w}_k|^2\right), \label{app_of_hkwk}\\
		a_j\beta_k|\textbf{g}_k^H\textbf{w}_j|^2\approx&a_j\beta_k(1-\epsilon^2)|\triangle\textbf{g}_k^H\textbf{w}_j|^2.
	\end{align}
\end{subequations} 
\end{small}Since $\textbf{w}_k$ is isotropic and independent of $\triangle\textbf{g}_k$, we have $a_j(1-\epsilon^2)|\triangle\textbf{g}_k^H\textbf{w}_j|^2\sim\Gamma(1,a_j(1-\epsilon^2))$ and $a_k(1-\epsilon^2)|\triangle\textbf{g}_k^H\textbf{w}_k|^2\sim\Gamma(1,a_k(1-\epsilon^2))$. Furthermore,
$a_k\epsilon^2|\hat{\textbf{g}}_k^H\textbf{w}_k|^2\sim\Gamma(M-K+1,a_k\epsilon^2)$ \cite{jaramillo2014coordinated}. 

By using moment-matching method in \cite{jaramillo2014coordinated}, the sum $\sum_{j\in\mathcal{K}} \Gamma(D_j,\theta_j)$ is approximated by $\Gamma(D,\theta)$, where $D = \frac{(\sum_j D_j\theta_j)^2}{\sum_j D_j\theta_j^2}$ and $\theta = \frac{\sum_j D_j\theta_j^2}{\sum_j D_j\theta_j}$. Accordingly, we approximate $\mathcal{Y} \sim \Gamma(D_{\mathcal{Y},k}, \hat{\theta}_{\mathcal{Y},k}\mathcal{P}\beta_k)$, and then derive the CDF of $\gamma_{c,k}$ as 
\begin{small}
\begin{align}\label{A2}
\hspace*{-0.3cm}	F_{\gamma_{c,k}}(x)= 1 - \frac{\exp\left(-\frac{x}{\mathcal{P}a_c\beta_k}\right)}
	{\theta_{\mathcal{Y},k}^{D_{\mathcal{Y},k}}\Gamma(D_{\mathcal{Y},k})}
	\int_{0}^{\infty} \exp(-I_0 y) \, y^{D_{\mathcal{Y},k}-1} dy,
\end{align}
\end{small}where $I_0 = \frac{x}{\mathcal{P}a_c\beta_k} + \frac{1}{\theta_{\mathcal{Y},k}}$. Using [\citenum{gradshteyn2014table}, Eq. (3.351.3)], (\ref{A2}) can be rewritten as (\ref{closed_form_cdf_of_gammack}), completing the proof.

\vspace*{-0.2cm}
	\section{Proof of Lemma \ref{CDF_of_gammapk}}\label{Proof_of_cdf_gammapk}
Let $\mathcal{L}=\sum_{\mathcal{K}\backslash k} \mathcal{P}a_j\beta_k|\textbf{g}_k^H\textbf{w}_j|^2$ and $\mathcal{Z}=\mathcal{P}a_k\beta_k|\textbf{g}_k^H\textbf{w}_k|^2$.
Following the moment-matching method in Appendix~\ref{Proof_of_cdf_gammac}, the distributions of $\mathcal{L}$ and $\mathcal{Z}$ can be approximated by $\Gamma(D_{\mathcal{L},k},\theta_{\mathcal{L},k})$ and $\Gamma(D_{\mathcal{Z}},\theta_{\mathcal{Z}})$, respectively. Accordingly, the CDF of $\gamma_{p,k}$ is given by
\begin{small}
	\begin{align}\label{initial_form_of_Fgammapk}
		F_{\gamma_{p,k}}(x)
	=\int_{0}^{\infty}\underbrace{\frac{\gamma\left(D_{\mathcal{Z}},\frac{x(l+1)}{\theta_{\mathcal{Z}}}\right)}{\Gamma(D_{\mathcal{Z}})}}_{I_1}
		\frac{l^{D_{\mathcal{L}}-1}\exp\left(-\frac{l}{\theta_{\mathcal{L}}}\right)}{\Gamma(D_{\mathcal{L}})\theta_{\mathcal{L}}^{D_{\mathcal{L}}}}\,dl.
	\end{align}
\end{small}

Since the closed-form expression for the above integral is not available, we approximate $D_{\mathcal{Z}}$ by $\lfloor D_{\mathcal{Z}} \rceil$. Then $I_1$ can be rewritten as
\begin{small}
	\begin{align}\label{after_form_of_I1}
		I_1 
		&\overset{(a)}{=} 1 - \exp\left(-\frac{x(l+1)}{\theta_{\mathcal{Z}}}\right)
		\sum_{m=0}^{\lfloor D_{\mathcal{Z}} \rceil - 1}
		\sum_{k=m}^{\lfloor D_{\mathcal{Z}} \rceil - 1}
		\binom{k}{m} \frac{1}{k!} \left(\frac{x}{\theta_{\mathcal{Z}}}\right)^k l^{k-m} \nonumber \\
		&\overset{(b)}{=} 1 - \sum_{m=0}^{\lfloor D_{\mathcal{Z}} \rceil - 1}
		\sum_{n=0}^{\lfloor D_{\mathcal{Z}} \rceil - m - 1}
		\kappa\, \exp\left(-\frac{x}{\theta_{\mathcal{Z}}}l\right) l^n,
	\end{align}
\end{small}where $(a)$ is derived based on the CDF of $\Gamma(\lfloor D_{\mathcal{Z}} \rceil,\theta_{\mathcal{Z}})$ and the binomial expansion, and $(b)$ stands for $n = k - m$ and $\kappa = \frac{\exp\left(-\frac{x}{\theta_{\mathcal{Z}}}\right)\left(\frac{x}{\theta_{\mathcal{Z}}}\right)^{m+n}}{m!n!}$. Substituting  (\ref{after_form_of_I1}) into (\ref{initial_form_of_Fgammapk}) and applying~\cite[Eq. (3.326.2)]{gradshteyn2014table}, we finally arrive at (\ref{closed_form_cdf_of_gammapk}).

\vspace*{-0.1cm}
\section{Proof of Lemma \ref{expression_of_covertenss_constraint}}\label{proof_of_covertness}
\setcounter{equation}{0}
Before deriving the closed-form expression for the covertness constraint, we first characterize the probability distribution of $y_v$. According to (\ref{hypotheses}), the likelihood functions of $y_v$ under hypotheses $\mathcal{H}_0$ and $\mathcal{H}_1$ are given by
$p_0(y_v)={\exp(-|y_v|^2/\lambda_0)}/{(\pi\lambda_0)}$ and $p_1(y_v)={\exp(-|y_v|^2/\lambda_0)}/{(\pi\lambda_1)}$, where
$\lambda_0 = \sigma^2$ and $\lambda_1 = Pa_c|\textbf{h}_v^H\textbf{w}_c|^2 + \sum_{k=1}^{K} Pa_k|\textbf{h}_v^H\textbf{w}_k|^2+\sigma^2$. Then  the KL divergence  is given by [\citen{ma2021robust}, (11b)]:
\begin{small}
\begin{align}\label{KL_1}
	\hspace*{-0.26cm}\mathcal{D}(p_1(y_v)||p_0(y_v))\!\!=\!\!\!\int_{-\infty}^{+\infty}\!\!p_1(y_v)\ln\!\frac{p_1(y_v)}{p_0(y_v)}\!dy_v\!\!=\!\!m\!\!-\!\!\ln\!\left(1\!+\!m\right),
\end{align}
\end{small}where $m=\frac{\lambda_1-\lambda_0}{\lambda_0}=\mathcal{P}d_v^{-\alpha}\mathcal{M}$ and  $\mathcal{M}=a_c|\textbf{g}_v^H\textbf{w}_c|^2+\sum_{j\in\mathcal{K}}a_j|\textbf{g}_v^H\textbf{w}_j|^2$. Since $\textbf{w}_c$ and $\textbf{w}_j$ are isotropically  distributed and independent of $\textbf{g}_v$, RVs $a_c|\textbf{g}_v^H\textbf{w}_c|^2$ and $\sum_{j\in\mathcal{K}}a_j|\textbf{g}_v^H\textbf{w}_j|^2$ respectively follow  $\Gamma(1,a_c)$ and $\Gamma(D_1,\hat{\theta}_1)$. As a result,  $\mathcal{M}$ can be approximated by $\Gamma(D_{\mathcal{M}},\theta_{\mathcal{M}})$, where $D_{\mathcal{M}} = \frac{(a_c+D_1\hat{\theta}_1)^2}{a_c^2+D_1\hat{\theta}_1^2}$ and $\theta_{\mathcal{M}} = \frac{a_c^2+D_1\hat{\theta}_1^2}{a_c+D_1\hat{\theta}_1}$.

Using the inequality $\ln(1 + x) \geq x - {x^2}/{2}$ for $x > 0$, we obtain an upper bound for (\ref{KL_1}), i.e., $\mathcal{D}(p_1(y_v)||p_0(y_v))\leq m^2/2$. Substituting this bound into (\ref{average_covert_constraint}), we arrive at $\zeta=\max_{v\in\Phi_{\text{E}}}\mathbb{E}_{\Phi_{\text{E}}}\left\{d_v^{-\alpha}\right\}\mathcal{P}m_0/2\leq\varsigma$, where $m_0=D_{\mathcal{M}}\theta_{\mathcal{M}}=1$ is the mean of RV $\mathcal{M}$. In the following, we show how to evaluate $\zeta$. 
\begin{small}
\begin{align}
\zeta&=\frac{\mathcal{P}m_0}{2}\int_{r_p}^{\infty}r^{-\alpha}f_{d_v}(r)dr\nonumber\\
&\overset{(a)}{=}\frac{\mathcal{P}m_0\lambda_e\pi}{\exp(-\lambda_e\pi r_p^2)}\int_{r_p}^{\infty}r^{1-\alpha}\exp(-\lambda_e\pi r^2)dr\nonumber\\
&\overset{(b)}{=}\frac{\mathcal{P}m_0(\lambda_e\pi)^{\alpha/2}}{2\exp(-\lambda_e\pi r_p^2)}\int_{\pi\lambda_er_p^2}^{\infty}\frac{\exp(-z)}{z^{\frac{\alpha}{2}}}dz,
\end{align}
\end{small}where $(a)$ can be arrived by using  void probability over a HPPP \cite{haenggi2013stochastic}; $(b)$ comes from $z=\lambda_e\pi r_p^2$. Finally, by applying [\citen{gradshteyn2014table}, Eq. (3.381.6)], we arrive at $\nu\mathcal{P}\leq \varsigma$. 
\end{appendices}

\def\bibfont{\fontsize{7.5}{8}\selectfont}
\bibliographystyle{IEEEtran}
\bibliography{refer.bib}

\begin{thebibliography}{10}
\providecommand{\url}[1]{#1}
\csname url@samestyle\endcsname
\providecommand{\newblock}{\relax}
\providecommand{\bibinfo}[2]{#2}
\providecommand{\BIBentrySTDinterwordspacing}{\spaceskip=0pt\relax}
\providecommand{\BIBentryALTinterwordstretchfactor}{4}
\providecommand{\BIBentryALTinterwordspacing}{\spaceskip=\fontdimen2\font plus
\BIBentryALTinterwordstretchfactor\fontdimen3\font minus
  \fontdimen4\font\relax}
\providecommand{\BIBforeignlanguage}[2]{{%
\expandafter\ifx\csname l@#1\endcsname\relax
\typeout{** WARNING: IEEEtran.bst: No hyphenation pattern has been}%
\typeout{** loaded for the language `#1'. Using the pattern for}%
\typeout{** the default language instead.}%
\else
\language=\csname l@#1\endcsname
\fi
#2}}
\providecommand{\BIBdecl}{\relax}
\BIBdecl

\bibitem{dizdar2021rate}
O.~Dizdar, Y.~Mao, and B.~Clerckx, ``Rate-splitting multiple access to mitigate
  the curse of mobility in (massive) {MIMO} networks,'' \emph{IEEE Trans.
  Commun.}, vol.~69, no.~10, pp. 6765--6780, 2021.

\bibitem{mao2022rate}
Y.~Mao, O.~Dizdar, B.~Clerckx, R.~Schober, P.~Popovski, and H.~V. Poor,
  ``Rate-splitting multiple access: {F}undamentals, survey, and future research
  trends,'' \emph{IEEE Commun. Surveys Tuts.}, vol.~24, no.~4, pp. 2073--2126,
  2022.

\bibitem{liu2017enhancing}
Y.~Liu, Z.~Qin, M.~Elkashlan, Y.~Gao, and L.~Hanzo, ``Enhancing the physical
  layer security of non-orthogonal multiple access in large-scale networks,''
  \emph{IEEE Trans. Wireless Commun.}, vol.~16, no.~3, pp. 1656--1672, 2017.

\bibitem{ma2021robust}
S.~Ma, Y.~Zhang, H.~Li, S.~Lu, N.~Al-Dhahir, S.~Zhang, and S.~Li, ``Robust
  beamforming design for covert communications,'' \emph{IEEE Trans. Inf.
  Forensics Security}, vol.~16, pp. 3026--3038, 2021.

\bibitem{hieu2023joint}
N.~Q. Hieu, D.~T. Hoang, D.~Niyato, D.~N. Nguyen, D.~I. Kim, and A.~Jamalipour,
  ``Joint power allocation and rate control for rate splitting multiple access
  networks with covert communications,'' \emph{IEEE Trans. Commun.}, vol.~71,
  no.~4, pp. 2274--2287, 2023.

\bibitem{nguyen2023jamming}
T.~T. Nguyen, N.~C. Luong, S.~Feng, K.~Elbassioni, and D.~Niyato,
  ``Jamming-based covert communication for rate-splitting multiple access,''
  \emph{IEEE Trans. Veh. Technol.}, vol.~72, no.~8, pp. 11\,074--11\,079, 2023.

\bibitem{chang2024star}
H.~Chang, X.~Kang, H.~Lei, T.~A. Tsiftsis, G.~Pan, and H.~Liu,
  ``{STAR-RIS}-aided covert communications in {MISO-RSMA} systems,'' \emph{IEEE
  Trans. Green Commun. Netw.}, vol.~8, no.~4, pp. 1318--1331, 2024.

\bibitem{zhang2024rate}
Y.~Zhang, W.~Ni, Y.~Mao, B.~Ning, S.~Xiao, W.~Tang, and D.~Niyato,
  ``Rate-splitting multiple access for covert communications,'' \emph{IEEE
  Wireless Commun. Lett.}, vol.~13, no.~6, pp. 1685--1689, 2024.

\bibitem{jia2025robust}
H.~Jia, Y.~Wang, W.~Wu, and J.~Yuan, ``Robust transmission design for covert
  satellite communication systems with dual-{CSI} uncertainty,'' \emph{IEEE
  Internet Things J.}, early access, 2025.

\bibitem{zhang2024covert}
Y.~Zhang, L.~Yang, X.~Li, K.~Guo, and H.~Liu, ``Covert communications for
  {STAR-RIS}-assisted industrial networks with a full duplex receiver and
  {RSMA},'' \emph{IEEE Internet Things J.}, vol.~11, no.~12, pp.
  22\,483--22\,493, 2024.

\bibitem{zhang2025covert}
Z.~Zhang, L.~Yang, H.~Lei, X.~Li, and D.~Niyato, ``Covert communication in
  {RSMA}-assisted ambient backscatter communication systems,'' \emph{IEEE
  Trans. Wireless Commun.}, early access, 2025.

\bibitem{liang2025covert}
Y.~Liang, K.~Huang, L.~Yang, X.~Li, H.~Liu, and Y.~Bian, ``Covert
  communications for active {STAR-RIS}-aided {RSMA} systems with hardware
  impairments,'' \emph{IEEE Trans. Intell. Transp. Syst.}, early access, 2025.

\bibitem{ma2021covert}
R.~Ma, W.~Yang, L.~Tao, X.~Lu, Z.~Xiang, and J.~Liu, ``Covert communications
  with randomly distributed wardens in the finite blocklength regime,''
  \emph{IEEE Trans. Veh. Technol.}, vol.~71, no.~1, pp. 533--544, 2021.

\bibitem{ma2023covert}
R.~Ma, W.~Yang, X.~Guan, X.~Lu, Y.~Song, and D.~Chen, ``Covert mmwave
  communications with finite blocklength against spatially random wardens,''
  \emph{IEEE Internet Things J.}, vol.~11, no.~2, pp. 3402--3416, 2023.

\bibitem{gradshteyn2014table}
I.~S. Gradshteyn and I.~M. Ryzhik, \emph{Table of integrals, series, and
  products}.\hskip 1em plus 0.5em minus 0.4em\relax Academic press, 2007.

\bibitem{pei2024secrecy}
X.~Pei, Y.~Chen, M.~Wen, T.~Pei, X.~Wang, and T.~A. Tsiftsis, ``Secrecy
  performance analysis of {RSMA}-based communications under partial {CSIT}
  against randomly located eavesdroppers,'' \emph{IEEE Trans. Veh. Technol.},
  2024.

\bibitem{vu2023performance}
T.-H. Vu, Q.-V. Pham, and S.~Kim, ``On performance of downlink {THz}-based
  rate-splitting multiple-access ({RSMA}): Is it always better than {NOMA}?''
  \emph{IEEE Trans. Veh. Technol.}, vol.~73, no.~3, pp. 4435--4440, 2023.

\bibitem{jaramillo2014coordinated}
D.~Jaramillo-Ramirez, M.~Kountouris, and E.~Hardouin, ``Coordinated multi-point
  transmission with imperfect {CSI} and other-cell interference,'' \emph{IEEE
  Trans. Wireless Commun.}, vol.~14, no.~4, pp. 1882--1896, 2014.

\bibitem{haenggi2013stochastic}
M.~Haenggi, \emph{Stochastic geometry for wireless networks}.\hskip 1em plus
  0.5em minus 0.4em\relax Cambridge University Press, 2013.

\end{thebibliography}
\end{document}